\documentclass[a4paper,11pt]{article}

\usepackage[utf8]{inputenc}
\usepackage[english]{babel}
\usepackage{fullpage}
\usepackage{amsmath,amsthm,amssymb}
\usepackage{xspace}

\newtheorem{theorem}{Theorem}

\newtheorem{lemma}{Lemma}

\newtheorem{open.problem}{Open Problem}

\newcommand{\Res}{\ensuremath{\mathsf{Res}}\xspace}
\newcommand{\tRes}[1]{\ensuremath{\mathsf{Res}^{*}(#1)}\xspace}

\newcommand{\tReslog}{\ensuremath{\mathsf{Res}^{*}(\mathsf{polylog})}\xspace}

\begin{document} 
\centerline{\bf \Large Short $\tReslog$ refutations if and only if narrow \Res refutations}
\centerline{(an answer to a question of Neil Thapen\footnote{Apparently the present result can be proved by methods of Bounded Arithmetic. He asked whether there exists a simpler and more direct proof.})}
\bigskip
\noindent Massimo Lauria --- \texttt{lauria.massimo@gmail.com}
\newline
\noindent January 28, 2012
\newline
\noindent Updated: \today
\bigskip

In this note we show that any $k$-CNF which can be refuted by a quasi-polynomial \tReslog refutation has a ``narrow'' refutation in \Res (i.e., of poly-logarithmic width).  Notice that while $\tReslog$ is a complete proof system, this is not the case for \Res if we ask for a narrow refutation.  In particular is not even possible to express all CNFs with narrow clauses.  But even for constant width CNF the former system is complete and the latter is not (see for example~\cite{bg01}).  We are going to show that the formulas ``left out'' are the ones which require large $\tReslog$ refutations.
We also show the converse implication: a narrow Resolution refutation can be simulated by a short $\tReslog$ refutation.

\section*{References}

The author does not claim priority on this result. The technical part of this note (Lemmata~\ref{lmm:h} and~\ref{lmm:e}) bears similarity with the relation between $d$-depth Frege refutations and tree-like $d+1$-depth Frege refutations outlined in~\cite{kra1994lower}. Part of it had already been specialized to \Res\ and $\Res(k)$ in~\cite{egm04}.

\section*{Preliminaries}

We consider a formula $F$ in CNF form with $m$ clauses and $n$ variables. We fix $k$ to be the width (i.e., number of literals) of the largest clause in $F$.  The $i$-th clause of $F$ is denoted as $C_{i}$, so that we can write $F$ as $\bigwedge_{i=1}^{m} C_{i}$.

A \emph{refutation} of size $S$ for formula $F$ is a sequence of formulas $D_{1}, \ldots, D_{S}$ such that $D_{i}=C_{i}$ for $1 \leq i \leq m$, and any $D_{i}$ for $i>m$ is either an \emph{axiom} or is \emph{logically inferred} from two previous formulas in the sequence.

Different proof systems are characterized by the types formulas allowed as proof lines, by the axioms and by the logical inference rules. Further constraints on the structure of the refutation may be imposed: while this may cause refutations to be longer, it facilitates the process of searching for the refutation itself.  A customary constraint is to impose \emph{tree-like} structure on the refutation: with the exception of the axioms and the initial clauses of $F$, any formula in the sequence can be used at most once as premise for a logical inference.  In this way the refutation looks like a binary tree in which every leaf is labelled either by an axiom or by an initial clause, and every internal vertex correspond to a formula inferred in the refutation.  Notice that in a tree-like refutation any formula that is needed more than once must be re-derived from scratch.  A proof system is called \emph{dag-like} is no constraint on the structure of the proof is imposed (indeed the refutation looks like a directed acyclic graph).

\textbf{\Res}: every line in the refutation is a clause (i.e., disjunction of literals), and there is a single inference rule:
$$
\frac{A \vee x \qquad B \vee \neg{x}}{A \vee B}
$$

The \emph{width} of a \Res refutation is the maximum number of literals contained in a clause of the refutation. In this note we focus on refutations with poly-logarithmic width.

\textbf{$\tRes{l}$}: the structure of the proof must be tree-like, every line in the refutation is a $l$-DNF, there is an axion introduction rule:
$$
\frac{}{l_{1}l_{2}\cdots l_{s} \bigvee \neg{l_{1}} \vee \neg{l_{2}} \vee \ldots \vee \neg{l_{s}}} \quad \text{for $1 \leq s \leq l$,}
$$
and an inference rule
$$
\frac{A \vee l_{1}l_{2}\cdots l_{s} \qquad B \vee \neg{l_{1}} \vee \neg{l_{2}} \vee \ldots \vee \neg{l_{s}}}{A \vee B} \quad \text{for $1 \leq s \leq l$,}
$$
for any set of $s$ literals $l_{1}, l_{2}, \ldots, l_{s}$.

\medskip

\textbf{Notes on the definition}: there are different ways to define $\tRes{l}$, which are all equivalent up to a polynomial size increase in the size of the refutation.  Notice that there are neither weakening rule nor AND introduction rules: this allows to control how terms  appear in the refutation and this makes the next proofs simpler.  This is without loss of generality \textbf{since we are only dealing with refutations of CNFs}.

The cut rule is defined to require exactly the same number of literals on both sides.  This is more rigid than usual, but \textbf{since the system is tree-like} this also is without loss of generality.  This last restriction is not needed in the following proofs, but is kept to simplify notations.

\section*{Main statement}

In this note we prove that

\begin{theorem}\label{thm:1}
  Let $F$ be a $k$-CNF. $F$ has a quasi-polynomial size $\tReslog$ refutation if and only if has a $\Res$ refutation of poly-logarithmic width.
\end{theorem}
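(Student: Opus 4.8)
The plan is to derive Theorem~\ref{thm:1} from two simulations, stated as Lemmata~\ref{lmm:h} and~\ref{lmm:e}, after translating the quantitative assumptions. On the \Res side, a refutation of width $w$ in $n$ variables may be assumed to use only clauses of width $\le w$, of which there are at most $(2n+1)^{w}$; so a poly-logarithmic-width refutation has $n^{\polylog}$, that is quasi-polynomially many, lines, and conversely $\log S$ is poly-logarithmic exactly when $S$ is quasi-polynomial. Thus it is enough to prove (h) a tree-like $\tRes{l}$ refutation of $F$ of size $S$ yields a \Res refutation of $F$ of width $O(k+l+\log S)$, and (e) a \Res refutation of $F$ of width $w$ yields a tree-like $\tRes{O(w)}$ refutation of $F$ of quasi-polynomial size; substituting $l,w,\log S=\polylog$ then gives both directions of the theorem.

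For Lemma~\ref{lmm:h} I would run the ``short tree-like refutations are narrow'' argument of Ben-Sasson--Wigderson, generalised from single-variable cuts to cuts on $l$-term DNFs. Given a tree-like $\tRes{l}$ refutation $\pi$ of $F$ of size $S$, the last inference cuts some term $t=l_{1}\cdots l_{s}$ with $s\le l$, so the two immediate subtrees are a derivation of the one-term DNF $\{t\}$ of size $S_{L}$ and a derivation of the clause $\neg l_{1}\vee\cdots\vee\neg l_{s}$ of size $S_{R}$, with $S_{L}+S_{R}<S$. Setting all of $l_{1},\dots,l_{s}$ to $1$ collapses the second subtree to a tree-like $\tRes{l}$ refutation of the corresponding restriction of $F$ (of size $\le S_{R}$, in one fewer variable), while setting any single $l_{i}$ to $0$ collapses the first subtree to such a refutation of the restriction ``$l_{i}=0$'' of $F$ (of size $\le S_{L}$). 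Recursing --- on size for the smaller of the two subtrees and on the number of variables for the larger, as in Ben-Sasson--Wigderson so that the width is governed by $\log S$ rather than by the length of a branch of $\pi$ --- produces narrow \Res refutations of all these restricted formulas; re-inserting the restricted literals converts each into a \Res derivation from $F$ of a clause contained in $\{l_{1},\dots,l_{s}\}$ or in $\{\neg l_{1},\dots,\neg l_{s}\}$, with width increased by at most $l$, and resolving these clauses against one another produces the empty clause. Carrying the bound through the recursion gives width $O(k+l+\log S)$. I expect this to be the hardest part: one must keep the two subtrees restricted compatibly and, above all, make the double recursion (decreasing size on one side, decreasing variable count on the other) actually keep the accumulated width at $O(\log S)$; the extra difficulty compared with plain \Res is that one cut now fixes up to $l$ literals simultaneously, so each level of recursion can cost $l$ rather than $1$.

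For Lemma~\ref{lmm:e} I would begin with a \Res refutation $\Pi$ of $F$ of width $w$, assumed to have $N\le(2n+1)^{w}$ lines, all of width $\le w$, and ``balance'' it into a tree-like $\tRes{w}$ refutation, following the specialisation to \Res and $\Res(k)$ of Krajíček's depth-$d$ versus tree-like depth-$(d+1)$ trade-off~\cite{kra1994lower} carried out in~\cite{egm04}. The gain in passing to $\tRes{w}$ is that a derived clause $D$ of $\Pi$, which cannot be imported as an axiom in plain \Res, can be case-split on for free: since $D$ has width $\le w$ the term $\neg D$ has width $\le w$, the $\tRes{w}$ axiom $(\neg D)\vee D$ is available, and a tree-like derivation of $D$ from $F$ can be glued under such a split. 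Repeatedly choosing a clause of $\Pi$ whose sub-derivation has about half the current size and recursing on the two halves multiplies the size only by a quasi-polynomial factor, which keeps the final refutation quasi-polynomially large when $w=\polylog$, while the term width never exceeds $w$. Since tree-like \Res, hence $\tRes{1}$, and more generally $\tRes{O(w)}$, is contained in $\tReslog$ for $w=\polylog$, this yields the converse. The delicate point here, handled in~\cite{egm04}, is organising the splicing so that re-deriving the separator clauses does not blow the size up beyond quasi-polynomial --- the naive option of re-deriving a clause once for every leaf that mentions it being far too costly.
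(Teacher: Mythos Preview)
Your overall decomposition into two simulations is exactly the paper's, and for Lemma~\ref{lmm:h} your argument is essentially the same Ben-Sasson--Wigderson-style induction on the tree-like proof, restricting the literals of the final cut term. Two small comments. First, the double recursion ``on size for the smaller subtree, on variables for the larger'' is unnecessary: because the proof is tree-like the two subtrees are disjoint and \emph{both} have strictly fewer leaves than $L$, so a single induction on $L$ suffices, and this is what the paper does. Second, your stated width $O(k+l+\log S)$ cannot be right; as you yourself note at the end of that paragraph, each level can cost $l$ rather than $1$, so the bound is $l\lceil\log L\rceil+\max\{k,l\}=O(l\log S+k)$, which is the paper's bound. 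This slip is harmless for Theorem~\ref{thm:1} since $l$ and $\log S$ are both $\polylog$.

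For Lemma~\ref{lmm:e} you take a genuinely different route. The paper does \emph{not} balance. Given the \Res refutation $D_{1},\dots,D_{S}$ of width $w$, it runs a backward induction on $t$ from $S-1$ down to $0$, maintaining a $w$-DNF $E_{t}$ whose terms are among $\neg D_{1},\dots,\neg D_{t}$: the base case $E_{S-1}$ is an axiom $x\vee\neg x$, and the step from $E_{t+1}$ to $E_{t}$ uses two $\tRes{w}$ axioms and two cuts to replace the term $\neg D_{t+1}$ by $\neg D_{a}\vee\neg D_{b}$, where $D_{a},D_{b}$ are the premises of $D_{t+1}$. Each $E_{t+1}$ is used exactly once, so the whole refutation is tree-like of size $O(S)$ with no separator analysis at all. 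Your balancing argument would also deliver Theorem~\ref{thm:1} (for $w=\polylog$ the blow-up stays quasi-polynomial), but the paper's construction is shorter, achieves the sharper $O(S)$ bound, and sidesteps precisely the ``re-deriving separator clauses'' difficulty you single out as delicate.
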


Notice that any dag-like $\Res$ refutation of poly-logarithmic width has size at most quasi-polynomial.

\section*{Proof of narrow $\Res$ simulation}

The proof is based on the following Lemma, which immediately implies one direction of Theorem~\ref{thm:1}.

\begin{lemma}\label{lmm:h}
Let $F$ be a $k$-CNF\@. If $F$ has a refutation in $\tRes{l}$ with $L$ leaves, then has a $\Res$ refutation of width $l \lceil \log L \rceil + max\{k,l\}$.
\end{lemma}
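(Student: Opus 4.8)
The plan is to prove, by induction on the tree-like $\tRes{l}$ refutation $\pi$, the following statement about each node $v$ of $\pi$: writing $D_v$ for the $l$-DNF at $v$, $L_v$ for the number of leaves of the subtree of $\pi$ rooted at $v$, and $\overline{D_v}$ for the set of clauses negating the terms of $D_v$ (so each has width $\le l$, and $\bigwedge \overline{D_v}$ is logically equivalent to $\neg D_v$), there is a \Res refutation of $F \cup \overline{D_v}$ of width at most $l\lceil \log L_v\rceil + \max\{k,l\}$. Applied at the root, where $D_v$ is the empty DNF, $\overline{D_v}$ is empty and $L_v = L$, this is exactly the Lemma.

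Base case: $v$ a leaf, so $L_v = 1$ and $\lceil\log L_v\rceil = 0$. If $D_v$ is an initial clause $C_i$ of $F$, then $\overline{D_v}$ is the set of $\le k$ unit clauses negating its literals, and resolving $C_i$ against them refutes $F\cup\overline{D_v}$ in width $\le k$. If $D_v$ is an axiom $\ell_1\cdots\ell_s\vee\neg\ell_1\vee\cdots\vee\neg\ell_s$, then $\overline{D_v}$ consists of the units $\ell_1,\dots,\ell_s$ together with the clause $\neg\ell_1\vee\cdots\vee\neg\ell_s$, which resolve to the empty clause in width $\le s\le l$. Either way the width is at most $\max\{k,l\}$.

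Inductive step: suppose $D_v = A\vee B$ is inferred from $D_u = A\vee \ell_1\cdots\ell_s$ and $D_w = B\vee\neg\ell_1\vee\cdots\vee\neg\ell_s$, so $\overline{D_v} = \overline{A}\cup\overline{B}$, while $\overline{D_u} = \overline{A}\cup\{\neg\ell_1\vee\cdots\vee\neg\ell_s\}$ and $\overline{D_w} = \overline{B}\cup\{\ell_1,\dots,\ell_s\}$. Since $L_u+L_w = L_v$, one child (the \emph{light} one) has at most $L_v/2$ leaves, so the inductive hypothesis gives it a \Res refutation of width at most $W - l$, where $W := l\lceil\log L_v\rceil + \max\{k,l\}$ — a spare budget of $l$ — while the \emph{heavy} child gets the full budget $W$. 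The key point is that a restriction $\rho$ fixing the $\ell_i$ to the values that \emph{satisfy} the light child's ``extra'' axioms (the single clause $\neg\ell_1\vee\cdots\vee\neg\ell_s$ when the light child is $u$, or the units $\ell_i$ when it is $w$) simply \emph{deletes} those axioms, so restricting the light child's refutation yields a width-$\le W-l$ \Res refutation of $(F\cup\overline{A})|\rho$ (resp.\ $(F\cup\overline{B})|\rho$); lifting it back reintroduces at most $s\le l$ literals, hence produces, from $F\cup\overline{A}$ (resp.\ from $F\cup\overline{B}$), \Res derivation(s) of width at most $W$ of (subclauses of) exactly the extra axiom(s) used by the heavy child's refutation of $F\cup\overline{D_w}$ (resp.\ of $F\cup\overline{D_u}$). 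Splicing these in place of those axioms converts the heavy child's width-$\le W$ refutation into a width-$\le W$ refutation of $F\cup\overline{A}\cup\overline{B} = F\cup\overline{D_v}$, completing the induction. Here I use freely three routine facts: restricting a \Res refutation does not increase its width; a width-$w$ \Res refutation of $G|\rho$ lifts to a width-$\le w+|\rho|$ \Res derivation from $G$ of a subclause of the clause falsified by $\rho$; and substituting a \Res derivation of a subclause of an axiom into a refutation that uses that axiom does not increase width.

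I expect the combination step to be the main obstacle. The two premises of the $\tRes{l}$ cut rule are not symmetric — one carries the wide term $\ell_1\cdots\ell_s$, the other the $s$ unit terms $\neg\ell_i$ — so one must route the light subtree's refutation into the heavy one in the correct direction in each of the two cases. Moreover the obvious move of turning the light subtree's refutation into a \Res \emph{derivation} of the cut clause by the weakening trick costs an extra $+s$ in width on top of an already full budget; it is the observation that an appropriate restriction removes the light subtree's cut axioms \emph{for free}, preserving precisely the spare budget $l$ needed for the subsequent lift, that makes the width accounting close exactly. Everything else is bookkeeping with $\lceil\log\rceil$ and the inequalities $s\le l\le\max\{k,l\}$.
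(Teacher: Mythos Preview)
Your proof is correct and follows the same core mechanism as the paper: induct on the tree, split on which child of a cut is ``light'' ($\le L/2$ leaves), restrict the light subtree's refutation to kill the cut literals, lift back (paying $\le l$ extra width against the saved $l$), and splice the result into the heavy subtree's refutation. The width arithmetic and the three routine facts you cite are exactly what is needed.

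The packaging differs in one respect worth noting. The paper inducts only on the root: since the final line is the empty DNF, the last cut is between a bare conjunction $\bigwedge_i l_i$ and a bare disjunction $\bigvee_i \neg l_i$, so the two subtrees are already $\tRes{l}$ derivations from $F$ alone. In the case where the conjunction side is heavy, the paper cannot directly apply the inductive hypothesis to it (it is not a refutation), so it uses an ad~hoc trick: replace every axiom introducing the term $\bigwedge_i l_i$ by the freshly derived clause $\bigvee_i \neg l_i$, turning that subtree into a genuine $\tRes{l}$ refutation of $F\cup\{\bigvee_i\neg l_i\}$ before invoking induction. Your strengthened invariant---carrying $F\cup\overline{D_v}$ at every node---absorbs this asymmetry and lets the same restrict/lift/splice step handle both cases uniformly, at the cost of verifying the base case for axiom leaves as well as initial-clause leaves. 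The resulting bound is identical.

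One small point of exposition: when the light child is $u$ (the one carrying the wide term), your single sentence ``a restriction $\rho$ fixing the $l_i$'' should be read as $s$ separate one-variable restrictions $l_i\mapsto 0$, one per unit axiom $l_i$ needed by the heavy side; your parenthetical plurals ``derivation(s)'' and ``axiom(s)'' indicate you mean this, but it would be clearer to say so explicitly.
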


\begin{proof}
  The proof of the lemma is by induction on the number of leaves $L$ in the refutation.  If $L=1$ the initial CNF contains the empty clause, and the result is trivial.

\medskip

Let us assume $L>1$, and fix $w=l \lceil \log L\rceil + max\{k,l\}$. Consider the last step of the refutation.  Since it results in an empty DNF, it must be the result of a cut rule between formulas $C=\bigwedge_{i=1}^{s}l_{i}$ (conjunction) and $D=\bigvee_{i=1}^{s}\neg{l_{i}}$ (disjunction), for a set of $s$ literals.

Since the refutation is tree-like the two proofs of $C$ and $D$ are disjoint, and the number of leaves in each proofs ($L_{C}$ and $L_{D}$ respectively) are such that $L_{C}+L_{D}=L$.  Thus either $L_{C}$ or $L_{D}$ is less than or equal to $\frac{L}{2}$, and both are less than $L$.  The proof is divided in two cases:

\bigskip

\textbf{($L_{C} \leq L/2$)}: Since $F \vdash \bigwedge_{i=1}^{s}l_{i}$ in $\tRes{l}$ with $L_{C}$ leaves, by fixing $l_{i}=0$ we get that $F|_{l_{i}=0} \vdash \square$ is proved by a $\tRes{l}$ refutation which uses at most $L_{C}$ leaves. By inductive hypothesis the same refutation can be done in $\Res$ with width at most $l \lceil \log L_{C}\rceil + max\{k,l\} \leq l \lceil\log L\rceil + max\{k,l\} - l \leq w-1$.

By weakening we have that $F \vdash l_{i}$ in $\Res$ in width $w$, for any $1 \leq i \leq s$. Using such literals we can deduce $F|_{l_{1}=1,\ldots,l_{s}=1}$ from $F$ in width $k$ by removing any occurrences of literals $\neg{l_{i}}$.  Since $F \vdash \bigvee_{i=1}^{s}\neg{l_{i}}$ in $\tRes{l}$, we can prove $F|_{l_{1}=1,\ldots,l_{s}=1} \vdash \square$ in $\tRes{l}$ with at most $L_{C} < L$ leaves.  By inductive hypothesis this refutation can be done in $\Res$ in width $w$.  Composing the resolution proofs $F \vdash l_{i}$ for $1 \leq i \leq s$, the proof of $F,l_{1},\ldots l_{s} \vdash F|_{l_{1}=1,\ldots,l_{s}=1}$ and the proof $F|_{l_{1}=1,\ldots,l_{s}=1} \vdash \square$, we get a $\Res$ refutation of width $w$ of $F \vdash \square$.

\bigskip

\textbf{($L_{D} \leq L/2$)}: we may assume that $s>1$ because otherwise formulas $C$ and $D$ can be swapped and the reasoning for the previous case applies. Since $F \vdash \bigvee_{i=1}^{s}\neg{l_{i}}$ in $\tReslog$ with $L_{D}$ leaves we get that $F|_{l_{1}=1,\ldots,l_{s}=1} \vdash \square$, with a $\tRes{l}$ refutation with $L_{D} \leq \frac{L}{2}$ leaves. By inductive hypothesis the same refutation can be done in $\Res$ with width at most $l \lceil \log L_{D} \rceil + max\{k,l\} \leq l \lceil \log L \rceil + max\{k,l\} - l \leq w-l$.  By weakening $\Res$ proves $F \vdash \bigvee_{i=1}^{s}\neg{l_{i}}$ in width $w$.

We now conclude  arguing that $\Res$ proves $F, \bigvee_{i=1}^{s}\neg{l_{i}} \vdash \square$ with width $w$.  To see that observe the $\tRes{l}$ proof of $F \vdash \bigwedge_{i=1}^{s}l_{i}$: each occurrence of $\bigwedge_{i=1}^{s}l_{i}$ is introduced in such proof using the axiom $\bigwedge_{i=1}^{s}l_{i} \vee \bigvee_{i=1}^{s}\neg{l_{i}}$. Substitute such axiom with the new initial clause $\bigvee_{i=1}^{s}\neg{l_{i}}$.  By an easy induction along the tree-like derivation, such transformation produces a $\tRes{l}$ proof of $F,\bigvee_{i=1}^{s}\neg{l_{i}} \vdash \square$ with $L_{C} < L$ leaves.  By inductive hypothesis this implies a $\Res$ refutation of width $w$ (notice that the initial width of the formula increases, but that is accounted in the definition of $w$).

\end{proof}

\section*{Proof of the short $\tRes{l}$ simulation}

The following lemma gives the other direction of Theorem~\ref{thm:1}.

\begin{lemma}\label{lmm:e}
Let $F$ be any CNF\@. If $F$ has a $\Res$ refutation of width $w$ and size $S$, then $F$ has a $\tRes{w}$ refutation of size $O(S)$.
\end{lemma}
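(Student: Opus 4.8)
The standard way to simulate a dag-like \Res refutation by a tree-like proof is to "unfold" the dag into a tree, paying a price in size proportional to... well, in the worst case exponential. So the naive approach is hopeless. Instead, the right move is to reverse the roles: read the width-$w$ \Res refutation $D_1,\ldots,D_S$ of $F$ as a bounded-depth decision procedure. The key observation is that a \Res refutation of width $w$ certifies that the "false clause search problem" for $F$ can be solved by querying at most $w$ variables at a time, and this naturally gives a tree-like $\tRes{w}$ proof of roughly the same size — the depth blow-up of tree-like unfolding is avoided because each clause $D_i$ in the \Res refutation has width $\le w$, so we never need to branch on more than $w$ literals to "use up" $D_i$.

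Concretely, I would proceed by induction on the length $S$ of the \Res refutation, peeling off the last clause, as in the proof of Lemma~\ref{lmm:h}, but in the opposite direction. Suppose the last inference derives the empty clause from $A\vee x$ and $A'\vee\neg x$ where $A,A'$ are empty (so really we resolve $x$ against $\neg x$); more generally one argues about the subderivations. The cleaner route: for each clause $D_i$ in the refutation, I want to build a tree-like $\tRes{w}$ derivation of the single term $\overline{D_i}$ — the conjunction of the negations of the literals of $D_i$, which is a $w$-term — from $F$ together with the terms $\overline{D_j}$ for $j<i$ as "extra hypotheses", of constant size; then feeding these together one obtains, at $i=S$ with $D_S=\square$, the term $\overline{\square}=$ (empty conjunction, i.e. the constant $1$ viewed as a $0$-DNF) which after one more cut against an axiom yields $\square$. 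The initial clauses $D_i=C_i$ of $F$ are handled by the axiom introduction rule: $C_i$ has width $\le k\le w$, so $C_i \vee \overline{C_i}$ is an axiom, and cutting it against $C_i\in F$ gives $\overline{C_i}$ in constant size. For a resolution step $D_i = A\vee B$ from $D_j = A\vee x$ and $D_{j'} = B\vee\neg x$ with $j,j'<i$: I have (inductively, or as hypotheses) the terms $\overline{D_j}=\overline{A}\wedge\neg x$ and $\overline{D_{j'}}=\overline{B}\wedge x$; I need to produce $\overline{D_i}=\overline{A}\wedge\overline{B}$. Using the axiom for the single literal $x$, namely $x\vee\neg x$, weakened appropriately — but wait, there is no weakening in this system — I instead observe that cutting the hypothesis-term $\overline{A}\wedge\neg x$ (as a $1$-line DNF, really a single term, sitting as $\square \vee (\overline{A}\wedge\neg x)$) with... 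Here the lack of weakening bites, and this is where the stated remark that "there are neither weakening rule nor AND introduction rules ... This is without loss of generality since we are only dealing with refutations of CNFs" must be invoked: one should instead track, for each $i$, a tree-like $\tRes{w}$ derivation of the \emph{DNF} $\bigvee_{j} \overline{D_j}$ over a suitable index set, or equivalently derive $\square$ directly from $F \cup \{\overline{D_j} : j \le i\text{ used so far}\}$, and splice.

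So the concrete inductive statement I would carry is: \emph{for every clause $D_i$ in the \Res refutation, there is a tree-like $\tRes{w}$ derivation, of size $O(i)$, of the empty clause from $F$ together with the unit-term hypotheses $\{\overline{D_j} : j \in \Gamma_i\}$ where $\Gamma_i$ is the set of initial clauses and resolution-parents reachable}, and then observe that at $i=S$ this \emph{is} a $\tRes{w}$ refutation of $F$ once every $\overline{D_j}$ has been eliminated — each elimination costs a single cut against the axiom $D_j \vee \overline{D_j}$ (legal since $|D_j|\le w$) followed by a cut against... no. Cleanest of all: induct to produce, for each $i$, a tree-like $\tRes{w}$ \emph{proof of the term $\overline{D_i}$ from $F$} of size $O(\sum_{j\le i}1)=O(i)$, where "proof of a term $T$" means a derivation of the $1$-term DNF $T$. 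Base case: initial clause, one axiom + one cut. Inductive step for $D_i=A\vee B$ from $A\vee x$, $B\vee\neg x$: I have proofs of $\overline{A}\wedge\neg x$ and of $\overline{B}\wedge x$. Introduce the axiom $(A\wedge\overline{A})\vee(\neg x) \vee \ldots$ — the point is that from the two terms one resolves on the pair $(x,\neg x)$ via the $s=1$ cut rule after noting $\overline A \wedge \neg x$ already \emph{contains} the literal $\neg x$ as a sub-term, so it can play the role of "$B \vee \neg l_1$" with $B = \overline A \wedge$(rest)$\ldots$ — but a term is a conjunction, not the disjunction the cut rule wants on its right premise. \textbf{This mismatch is the main obstacle}, and I expect the resolution to be: don't prove terms, prove clauses (DNFs that are actually clauses), tracking $D_i$ itself and its sub-clauses, and use the $\tRes{w}$ machinery to merge two width-$\le w$ clauses $D_j, D_{j'}$ into $D_i$ — a width-$\le w$ clause — in $O(1)$ steps using one cut with $s=1$ on the resolved variable, which is exactly the ordinary \Res rule and is a special case of the $\tRes{w}$ cut rule. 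Then the tree-like $\tRes{w}$ derivation mirrors the \Res dag \emph{but must re-derive shared sub-proofs}; to avoid the exponential blow-up I invoke that we only need the final \emph{refutation} to be tree-like and of size $O(S)$, which forces a different accounting: process the \Res proof so that each $D_i$ is derived exactly once by instead building a single tree-like $\tRes{w}$ proof whose internal structure branches on at most $w$ literals per $D_i$, giving total size $O(S)$ — the honest way to see this is to exhibit the tree-like $\tRes{w}$ proof directly as the "prover's strategy" in the Prover–Delayer game / the canonical search-tree associated to the width-$w$ refutation, whose size is linear in $S$ because each branch kills one clause $D_i$ and there are $S$ of them. I would present it in that search-tree language, deferring the bookkeeping on exact constants.
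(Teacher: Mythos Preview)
Your final proposed approach --- the canonical search-tree / Prover--Delayer strategy extracted from the width-$w$ \Res refutation --- has a genuine gap. The search tree you describe is just the naive tree unfolding of the \Res dag: start from the empty clause and branch on the resolved variable at each step. The number of nodes in this tree is the number of source-to-sink \emph{paths} in the dag, not the number of clauses $S$; in general this is exponential in $S$, and the width bound $w$ does nothing whatsoever to control size. Your justification that ``each branch kills one clause $D_i$ and there are $S$ of them'' is simply false: the same clause $D_i$ is visited along many different root-to-leaf paths, once for every way it is reused in the dag. This is precisely the exponential blow-up you correctly rejected in your first paragraph, now re-entering in disguise.

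The idea you mentioned in passing and then abandoned --- ``track, for each $i$, a tree-like $\tRes{w}$ derivation of the DNF $\bigvee_{j} \overline{D_j}$'' --- is exactly the right one, and is what the paper does. Set $E'_{t} = \bigvee_{i=1}^{t} \neg D_i$ and by \emph{backward} induction from $t=S-1$ down to $t=0$ derive some $w$-DNF $E_t$ whose terms are among those of $E'_t$. The base case $E_{S-1}$ is an axiom $x \vee \neg x$ (the negations of the two unit clauses that resolved to $\square$). For the inductive step, write $E_{t+1} = \Delta \vee (\neg A \wedge \neg B)$ where $D_{t+1} = A \vee B$ was resolved from $D_a = A \vee x$ and $D_b = B \vee \neg x$ with $a,b \le t$; two axioms and two cuts (the second cut on the full term $\neg A \wedge \neg B$, which is legal since $|D_{t+1}|\le w$) turn this into $\Delta \vee (\neg A \wedge \neg x) \vee (\neg B \wedge x)$, whose terms lie in $E'_t$. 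The point you missed, and the reason this avoids the blow-up, is that this step uses $E_{t+1}$ \emph{exactly once}: the whole derivation is therefore tree-like of size $O(S)$. The dag-sharing of the original \Res proof is absorbed into the single growing DNF line, not into the tree structure of the $\tRes{w}$ proof.
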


\begin{proof}
  Consider the $\Res$ refutation $D_{1},D_{2},\ldots D_{S}$ of $F$. We define the sequence of $w$-DNFs $E'_{t}=\bigvee_{i=1}^{t} \neg{D_{i}}$.  By backward induction on $t$ from $S-1$ to $0$ we are going to derive a $w$-DNF $E_{t}$ such that the terms of $E_{t}$ are a subset of the terms of $E'_{t}$.  Since $E'_{0}$ is the empty DNF that would conclude the proof.

For $t=S-1$ notice that $E'_{S-1}$ contains $x \vee \neg{x}$ for some variable $x$ in $F$, which is an axiom in $\tRes{w}$.

Fix $t < S - 1$ and consider $D_{a}$ and $D_{b}$ which has been used to derive $D_{t+1}$. For convenience write as follows
$$
D_{a} \equiv A \vee x \quad D_{b}\equiv B \vee \neg{x} \quad D_{t+1} \equiv A \vee B \quad E_{t+1}\equiv \Delta \vee (\neg{A} \wedge \neg{B})
$$
for some $w$-DNF $\Delta$, some clauses $A$, $B$ and some variable $x$.  Terms of $\Delta$ are all contained in $E'_{t+1}$ by inductive hypothesis, and furthermore they are contained in $E'_{t}$ because $\neg{D_{t+1}}$ has been factored out.  Employ the following tree-like deduction
\begin{align}
  (\neg{A} \wedge \neg x) \vee A \vee x       & & \text{Axiom of $\tRes{w}$}\label{eq:1}\\
  (\neg{B} \wedge x     ) \vee B \vee \neg{x} & & \text{Axiom of $\tRes{w}$}\label{eq:2}\\
  (\neg{A} \wedge \neg{x}) \vee (\neg{B} \wedge x) \vee A \vee B & & \text{Cut on term $x$ between~(\ref{eq:1}) and~(\ref{eq:2})}\label{eq:3}\\
  \Delta \vee (\neg{A} \wedge \neg{B}) & & \text{$E_{t+1}$ deduced by induction hypothesis}\label{eq:4}\\
  (\neg{A} \wedge \neg{x}) \vee (\neg{B} \wedge x) \vee \Delta & & \text{Cut on term $\neg{A} \wedge \neg{B}$ between~(\ref{eq:3}) and (\ref{eq:4})}\label{eq:5}
\end{align}
Notice that formula~(\ref{eq:5}) is a $w$-DNF, and its terms are contained in $E'_{t}$.

For $t=0$ we get the empty DNF\@. At each step $E_{t}$ is derived in $\tRes{w}$ using a single occurrence of formula $E_{t+1}$. That means that the whole refutation is tree-like and has $O(S)$ proof lines.
\end{proof}

Notice that the $w$-DNFs have at most $S$ terms each, so the size of the refutation has at most $O(S^{2})$ terms.  For narrow \Res refutations which require large clause space this bound is tight for our construction.

\end{document}